\newcommand{\eq}[1]{(\ref{eq:#1})}
\newcommand{\thm}[1]{Theorem~\ref{thm:#1}}
\newcommand{\lem}[1]{Lemma~\ref{lem:#1}}
\newcommand{\poly}{\mathop{\mathrm{poly}}}
\newcommand{\mcn}{\mathop{\mathrm{mcn}}}
\newcommand{\ii}{{\mathrm{i}}}
\newcommand{\norm}[1]{\left\|{#1}\right\|}
\newcommand{\be}{\begin{equation}}
\newcommand{\ee}{\end{equation}}
\begin{document}
\title{Simulating sparse Hamiltonians \\ with star decompositions\thanks{Work supported by MITACS, NSERC, QuantumWorks, and the US ARO/DTO.}}
\author{Andrew M.\ Childs\inst{1,3} \and 
        Robin Kothari\inst{2,3}}

\institute{Department of Combinatorics \& Optimization, University of Waterloo \and
David R.\ Cheriton School of Computer Science, University of Waterloo \and
Institute for Quantum Computing, University of Waterloo}

\maketitle

\thispagestyle{plain}
\pagestyle{plain}


\begin{abstract}
We present an efficient algorithm for simulating the time evolution due to a sparse Hamiltonian. In terms of the maximum degree $d$ and dimension $N$ of the space on which the Hamiltonian $H$ acts for time $t$, this algorithm uses $(d^2(d+\log^* N)\norm{Ht})^{1+o(1)}$ queries.  This improves the complexity of the sparse Hamiltonian simulation algorithm of Berry, Ahokas, Cleve, and Sanders, which scales like $(d^4(\log^* N)\norm{Ht})^{1+o(1)}$. To achieve this, we decompose a general sparse Hamiltonian into a small sum of Hamiltonians whose graphs of non-zero entries have the property that every connected component is a star, and efficiently simulate each of these pieces.
\end{abstract}


\section{Introduction\label{sec:intro}}

Quantum simulation of Hamiltonian dynamics is a well-studied problem~\cite{Lloyd96,AT03,BAC+07} and is one of the main motivations for building a quantum computer. Since the best known classical algorithms for simulating quantum dynamics are inefficient, Feynman suggested that computers that are inherently quantum might be better at simulating quantum systems~\cite{Feynman82}. Besides simulating physics, Hamiltonian simulation has algorithmic applications, such as adiabatic optimization~\cite{FGG+00}, unstructured search~\cite{FG98}, and the implementation of continuous-time quantum walks~\cite{CCD+03,FGG08}.

The input to the Hamiltonian simulation problem is a Hamiltonian $H$ and a time $t$; the problem is to implement the unitary operator $e^{-\ii Ht}$ approximately. We say that a Hamiltonian acting on an $N$-dimensional quantum system can be simulated efficiently if there is a quantum circuit using $\poly(\log N,t,1/\epsilon)$ one- and two-qubit gates that approximates (with error at most $\epsilon$) the evolution according to $H$ for time $t$. Since the time evolution depends on the product $Ht$, the size of the circuit should also be bounded by a polynomial in some quantity measuring the size of $H$. When $H$ is sparse, most of its matrix norms have comparable values, so the complexity of simulating $H$ is not very sensitive to how its size is quantified.  It is conventional to require that the scaling be polynomial in $\norm{H}$, the spectral norm of $H$.

Lloyd presented a method for simulating quantum systems that can be described by a sum of local Hamiltonians~\cite{Lloyd96}. A Hamiltonian is called local if it acts non-trivially on at most a fixed number of qubits, independent of the size of the system.

This was later generalized by Aharonov and Ta-Shma~\cite{AT03} to the case of sparse (and efficiently row-computable) Hamiltonians. A Hamiltonian is sparse if it has at most poly($\log N$) nonzero entries in any row. It is efficiently row-computable if there is an efficient procedure to determine the location and matrix elements of the nonzero entries in each row. 

The complexity of this simulation was improved by Childs \cite{Childs04} and further improved by Berry, Ahokas, Cleve and Sanders~\cite{BAC+07}. Their algorithm has query complexity  $(d^4(\log^* N)\norm{Ht})^{1+o(1)}$, where $d$ is the maximum degree of the graph of the Hamiltonian $H$. These algorithms decompose the Hamiltonian into a sum of Hamiltonians, each of which is easy to simulate. In this paper, we present a different method of decomposing the Hamiltonian, giving an algorithm with query complexity $(d^2(d+\log^* N)\norm{Ht})^{1+o(1)}$.

Note that the simulation of Ref.~\cite{BAC+07} has also been improved using a completely different approach~\cite{Childs09,BC09}. That algorithm is more efficient in terms of all parameters except the error $\epsilon$, on which its dependence is considerably worse. The algorithm we present here maintains the same dependence on $\epsilon$ as in Ref.~\cite{BAC+07}, providing the best known method for high-precision simulation of sparse Hamiltonians.

\section{Hamiltonians and graphs}

A Hamiltonian $H$ acting on $n$ qubits is a $2^n\times 2^n$ Hermitian matrix. It can also be thought of as the weighted adjacency matrix of a graph on $2^n$ vertices, where the weights are complex numbers and the weight of an edge from $u$ to $v$ is the complex conjugate of the weight of the edge from $v$ to $u$. We call the undirected graph formed by connecting two vertices if and only if the edge between them has nonzero weight the \emph{graph of the Hamiltonian}. 

A Hamiltonian is said to be $d$-sparse if it has at most $d$ nonzero entries in each row (i.e., the maximum degree of its graph is $d$). We often associate properties of the graph of a Hamiltonian with the Hamiltonian itself. For instance, we might say ``$H$ is a forest,'' meaning that the graph of $H$ is a forest.

A \emph{star graph} is a tree in which one vertex (called the center) is connected to all the other vertices and there are no other edges. In other words, it is a complete bipartite graph $K_{1,r}$. We call a forest in which each tree is a star graph a \emph{galaxy}. 

A directed graph is a \emph{directed forest} (\emph{directed tree}) if its undirected graph is a forest (tree). A directed tree is an \emph{arborescence} if it has a unique root $v$ such that all edges point away from $v$. Alternately, there is exactly one directed path from $v$ to any other vertex $u$. In an arborescence, the edges are always directed from the parent to the child. A directed forest in which each tree is an arborescence is called a forest of arborescences.

We use several matrix norms in our analysis.  These include the spectral norm, $\norm{H}:=\max_{\norm{v}=1} \norm{Hv}$; the maximum entry norm, $\max(H):= \max_{ij} |H_{ij}|$; and the maximum column norm,  $\mcn(H):=\max_{j} \norm{He_j}$, where $e_j$ is the $j^{\mathrm{th}}$ column of the identity matrix.

\section{Problem description and previous results}

The problem is to approximately implement the unitary $e^{-\ii Ht}$ for a $d$-sparse and efficiently row-computable $N$-dimensional Hamiltonian $H$ for time $t$. As input, we are given black-box access to $H$, and the values of $d$, $t$, and $N$. Since the Hamiltonian is sparse and efficiently row-computable, there is a convenient black-box formulation of the problem that abstracts away the details of computing matrix entries and locations. The Hamiltonian is provided as a black-box function $f$, which accepts a row index and an integer $i \in \{1,2,\ldots,d\}$ and outputs the column index and matrix element corresponding to the $i^\mathrm{th}$ nonzero entry in that row, if one exists. More precisely, if the nonzero elements in row $x$ are $y_1,y_2,\ldots, y_{d_x}$, where $d_x \leq d$ is the degree of $x$, then $f(x,i) = (y_i, H_{x,y_i})$ for $i\leq d_x$ and $f(x,i)=(x,0)$ for $i > d_x$. This black box can be implemented efficiently if the Hamiltonian to be simulated is sparse and efficiently row-computable. 

For each row $x$, we allow the order in which the $y_i$ are given by the oracle to be arbitrary (but fixed). We do not assume that there is a convenient ordering, such as the increasing order of labels. To use the black box in a quantum circuit, we define an equivalent unitary matrix $U_f$ which performs the operation $U_f\ket{x,i,0}=\ket{x,i,f(x,i)}$.

Let us denote the minimum number of queries to $U_f$ required to approximately simulate $e^{-\ii Ht}$ (up to error $\epsilon$, as quantified by the trace distance) by $Q(H,t)$. A common approach to this problem breaks it into two subproblems, which we call the Hamiltonian decomposition problem and the Hamiltonian recombination problem. First the Hamiltonian is decomposed into a sum of easy-to-simulate Hamiltonians; then these Hamiltonians are simulated for short times in a specific manner so that the overall simulation is approximately the same as that of $H$.

Since we will also follow the decomposition--recombination strategy, we review this approach as applied in Ref.~\cite{BAC+07}. The given Hamiltonian $H$ is decomposed into a sum of $m$ Hamiltonians, $H=\sum_{j=1}^m H_j$. Let $Q(H_j)$ denote the number of queries required to simulate $H_j$ for time $t'$ given black-box access to $H$. In general, the number of queries required might depend on $t'$, but in the simulations used here $Q(H_j)$ is independent of $t'$. Note that $Q(H_j)$ includes the number of queries required to decompose $H$ into $H_j$ as well as to simulate $H_j$. In Ref.~\cite{BAC+07}, the Hamiltonians $H_j$ are 1-sparse, and their decomposition uses $O(\log^* N)$ queries\footnote{The function $\log^*$ is defined by $\log^* N = 0$ if $N \le 1$ and $\log^* N = 1+\log^* \log N$ if $N > 1$.} to a black box for $H$. Since a 1-sparse Hamiltonian can be simulated with 2 queries given an oracle for the 1-sparse Hamiltonian~\cite{CCD+03,Childs04}, $Q(H_j)=O(\log^* N)$. More precisely,

\begin{theorem}[Hamiltonian edge decomposition~\cite{BAC+07}]
\label{thm:BACSHD}
If $H$ is an $N\times N$ Hamiltonian with maximum degree $d$, then there exists a decomposition $H=\sum_{j=1}^m H_j$, where each $H_j$ is 1-sparse, such that $m=6d^2$ and each query to any $H_j$ can be simulated by making $Q(H_j)=O(\log^* N)$ queries to $H$. 
\end{theorem}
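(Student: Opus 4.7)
The plan is to construct a proper edge coloring of the graph of $H$ with $6d^2$ colors via a distributed algorithm, and then take $H_j$ to be the restriction of $H$ to edges of color $j$. Since a proper edge coloring places no two edges of the same color at a common vertex, every color class is a matching, so each $H_j$ is automatically $1$-sparse. Answering an oracle query to $H_j$ at row $x$ then amounts to determining whether $x$ is incident to any color-$j$ edge and, if so, locating its other endpoint; the oracle $U_f$ for $H$ then supplies the corresponding matrix element.

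To produce the coloring locally, I would apply a Linial-style $\log^*$-round distributed coloring algorithm to the line graph $L$ of the graph of $H$. Since the graph of $H$ has maximum degree $d$, the line graph $L$ has maximum degree at most $2(d-1)$, and each vertex of $L$ (an edge of $H$) admits a canonical initial label derived from its endpoint labels in $[N]$ together with the indices returned by $f$ at each endpoint. Linial's algorithm reduces the number of colors in each round via a set-cover argument and, after $O(\log^* N)$ rounds, yields a proper vertex coloring of $L$ using at most $6d^2$ colors. A proper vertex coloring of $L$ is precisely a proper edge coloring of the graph of $H$.

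It remains to bound the number of queries needed to answer one oracle call to $H_j$. Given a row $x$, one first enumerates the at most $d$ neighbors of $x$ via $d$ calls to $U_f$; for each incident edge one runs the distributed coloring algorithm locally to compute its color, and then returns the unique incident edge (if any) whose color is $j$, reading off $H_{x,y}$ via one further query. Computing the color of a single edge requires exploring its $O(\log^* N)$-step neighborhood in $L$, and a careful accounting shows that the total cost is $O(\log^* N)$ queries to $U_f$ (with the $d$-dependence absorbed into the implicit constant, so that the $d$ factors appear separately in the overall simulation bound).

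The main obstacle is showing that the local simulation of the distributed algorithm indeed costs only $O(\log^* N)$ queries. A naive depth-$k$ recursion in $L$ could visit up to $(2d-2)^k$ vertices, so obtaining the stated bound relies both on the extremely small round depth $\log^* N$ and on the structural feature of Linial's update rule that each new color depends only on a short sketch of the multiset of neighboring colors rather than on their full enumeration. Once this is in place, the remaining verifications are routine: that the coloring is consistent whether computed from $x$ or from $y$ for any edge $\{x,y\}$, that exactly $6d^2$ colors suffice, and that the resulting oracle $f_j$ meets the conventions described earlier for $1$-sparse Hamiltonians.
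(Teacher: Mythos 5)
The obstacle you flag at the end is not a technicality to be absorbed into constants---it is the reason this route does not work, and the actual proof in Ref.~\cite{BAC+07} (which this paper imports without reproving) is structured specifically to avoid it. In Linial's algorithm each round updates a vertex's color as a function of \emph{all} of its neighbors' current colors; no ``short sketch'' helps, because any nonconstant function of the $2(d-1)$ neighboring colors in the line graph requires querying all of those neighbors, whose colors in turn depend on all of \emph{their} neighbors, and so on. Locally simulating $r$ rounds therefore costs on the order of $(2d-2)^{r}$ queries, so with $r=\Theta(\log^* N)$ you get $(2d)^{\Theta(\log^* N)}$, which is neither $O(\log^* N)$ nor even $\poly(d)\cdot\log^* N$. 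Since the final bound \eq{BACSHS} is obtained as $N_{\mathrm{exp}}\times\max_j Q(H_j)$ with $N_{\mathrm{exp}}$ already contributing the $d^4$ factor, a $d$-dependent $Q(H_j)$ would strictly weaken the theorem; the claim $Q(H_j)=O(\log^* N)$ must be independent of $d$.

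The fix in Ref.~\cite{BAC+07} is to do the $d^2$ part of the splitting \emph{before} any iterative color reduction, not after. For an unordered index pair $\{i,j\}$, put the edge $xy$ in class $\{i,j\}$ when $y$ is the $i^{\mathrm{th}}$ neighbor of $x$ and $x$ is the $j^{\mathrm{th}}$ neighbor of $y$; each vertex then has at most two incident edges in a fixed class, so each class is a disjoint union of paths and cycles. On such degree-$2$ graphs one runs deterministic coin tossing~\cite{CV86} along the path, where---exactly as in the proof of \lem{vertexcolor} here---each vertex's new color depends on only \emph{one} designated neighbor, so the local simulation walks a single chain of $O(\log^* N)$ predecessors at $O(1)$ queries per step. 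Six colors per class times $d^2$ classes gives $m=6d^2$ with $Q(H_j)=O(\log^* N)$. Your high-level architecture (color classes of a proper edge coloring as the $1$-sparse pieces) is the right one, but you need the degree reduction to $2$ first so that the iterative coloring has single-neighbor dependence; as written, the proposal does not establish the query bound.
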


These Hamiltonians are then recombined using the Lie--Trotter formula, which expresses the time evolution due to $H$ as a product of time evolutions due to the $H_j$. The unitary $e^{-\ii Ht}$ is approximated by a product of exponentials $e^{-\ii H_jt'}$, such that the maximum error in the final state does not exceed $\epsilon$. We want to upper bound the number of exponentials required, $N_\mathrm{exp}$. Reference~\cite{BAC+07} proves the following.

\begin{theorem}[Hamiltonian recombination~\cite{BAC+07}]
\label{thm:BACSHR}
Let $k$ be any positive integer.
If $H=\sum_{j=1}^m H_j$ is a Hamiltonian to be simulated for time $t$ by a product of exponentials  $e^{-\ii H_jt'}$, and the permissible error (in terms of trace distance) is bounded by $\epsilon \leq 1 \leq 2m5^{k-1}\norm{H}t$, then the number of exponentials required, $N_\mathrm{exp}$, is bounded by
\be
N_{\mathrm{exp}} \leq 5^{2k}m^2\norm{H}t\left(\frac{m\norm{H}t}{\epsilon}\right)^{1/2k}.
\ee
\end{theorem}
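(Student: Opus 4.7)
The plan is to follow the standard higher-order Suzuki--Trotter approach: build a $2k$-th order product formula out of the $e^{-\ii H_j \lambda}$, then split $[0,t]$ into enough short pieces so that the accumulated error is at most $\epsilon$.

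Concretely, I would start from the symmetric second-order product $S_2(\lambda) := \prod_{j=1}^m e^{-\ii H_j \lambda/2} \prod_{j=m}^1 e^{-\ii H_j \lambda/2}$ and define higher orders recursively by
\[
 S_{2k}(\lambda) := [S_{2k-2}(p_k\lambda)]^2 \, S_{2k-2}((1-4p_k)\lambda) \, [S_{2k-2}(p_k\lambda)]^2,
\]
with $p_k := 1/(4-4^{1/(2k-1)})$. By a straightforward induction, $S_{2k}(\lambda)$ is a product of exactly $2m \cdot 5^{k-1}$ exponentials of the $H_j$'s, which will be the per-step cost.

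Next I would control the local error. The Suzuki coefficients are chosen so that $S_{2k}(\lambda)$ matches $e^{-\ii H\lambda}$ through the $\lambda^{2k}$ term of the Taylor expansion, yielding a single-step bound of the form $\norm{S_{2k}(\lambda) - e^{-\ii H \lambda}} \le C_k (m\norm{H}\lambda)^{2k+1}$ for $\lambda$ small enough, where $C_k$ is at most singly exponential in $k$. Splitting $[0,t]$ into $r$ equal intervals and approximating $e^{-\ii H t}$ by $[S_{2k}(t/r)]^r$, subadditivity of the error gives a total bound of $C_k (m\norm{H}t)^{2k+1}/r^{2k}$; forcing this to be at most $\epsilon$ and solving for $r$ yields
\[
 r \ge C_k^{1/2k}\,(m\norm{H}t)\,\bigl(m\norm{H}t/\epsilon\bigr)^{1/2k}.
\]
Multiplying the smallest integer $r$ meeting this condition by the per-step cost $2m\cdot 5^{k-1}$ then produces a bound of exactly the claimed shape.

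The one delicate point is the integer rounding, and this is what the hypothesis $\epsilon \le 1 \le 2m5^{k-1}\norm{H}t$ is designed to handle. The upper constraint $\epsilon \le 1$ keeps $(m\norm{H}t/\epsilon)^{1/2k} \ge 1$ so the factors compose cleanly, while the lower constraint $1 \le 2m5^{k-1}\norm{H}t$ forces the right-hand side of the displayed bound on $r$ to already be at least $1$, so that taking the ceiling costs only a factor of two. That factor, together with the Suzuki constant $C_k$ from the Taylor remainder, is then comfortably swallowed when one replaces $2 \cdot 5^{k-1}$ by $5^{2k}$ in the closed-form expression, producing the stated $N_{\mathrm{exp}}$. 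Getting this last bookkeeping step clean is the main (though mild) obstacle; everything else is routine once the Suzuki recursion is in place.
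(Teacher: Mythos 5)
This theorem is imported from Berry, Ahokas, Cleve, and Sanders~\cite{BAC+07} and the paper gives no proof of its own, so the only available comparison is with that reference: your outline (recursive Suzuki construction of $S_{2k}$, a per-step count of $2m5^{k-1}$ exponentials, a $\lambda^{2k+1}$ local error, and subdivision of $[0,t]$ into $r$ segments) is exactly the argument used there, including the correct reading of the hypothesis $\epsilon \le 1 \le 2m5^{k-1}\norm{H}t$ as handling the ceiling on $r$. The one step you defer---verifying that the Suzuki remainder constant satisfies $2\cdot 5^{k-1}C_k^{1/2k} \le 5^{2k}$, which requires actually bounding the products of the $|p_k|$ and $|1-4p_k|$ coefficients rather than asserting $C_k$ is ``comfortably swallowed''---is precisely the bookkeeping that occupies the corresponding proof in~\cite{BAC+07}, so your sketch is the right approach but is not yet a complete derivation of the stated constants.
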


Using the upper bound on the number of exponentials and the number of queries needed to simulate any exponential, the total number of queries needed to simulate the Hamiltonian $H$ satisfies $Q(H,t) \leq N_\mathrm{exp} \times \max_{j} Q(H_j)$. With $Q(H_j)=O(\log^* N)$ and $m=6d^2$, we get
\be
\label{eq:BACSHS}
Q(H,t) = O\left(5^{2k} d^4 (\log^*N) \norm{H} t \left(\frac{d^2\norm{H}t}{\epsilon}\right)^{1/2k}\right).
\ee

We see that $Q(H,t)$ is almost linear in $t$, which is almost optimal due to a no--fast-forwarding theorem~\cite{BAC+07}. However, the dependence on $d$ is not optimal.  In the present paper we improve the dependence on $d$ without affecting the other terms. The dependence on $d$ has been improved in other approaches, but only at the expense of a worse dependence on the error $\epsilon$~\cite{Childs09,BC09}.

We propose a new algorithm for solving the Hamiltonian decomposition problem. This strategy breaks up the Hamiltonian into only $m=6d$ parts, but increases $Q(H_j)$ to $O(d + \log^* N)$, improving the overall dependence on $d$ and $N$.


\section{Hamiltonian decomposition}
\label{sec:HD}

The Hamiltonian decomposition problem is the problem of decomposing a Hamiltonian $H$ into a sum of $m$ Hamiltonians $H_j$ such that given a label $1\le j \le m$ and a time $t'$, the unitary $e^{-\ii H_jt'}$ can be efficiently simulated. 

We solve this problem by decomposing the Hamiltonian into $m=6d$ galaxies. To achieve this, we first decompose the given graph into $d$ forests using the forest decomposition technique of Paneconesi and Rizzi~\cite{PR01}. The idea is to assign one of at most $d$ colors to each edge of the graph (not necessarily a proper edge coloring) such that the edges of any particular color form a forest. Not only is this decomposition possible, but it has some special properties that are required later in \lem{vertexcolor}.

\begin{lemma}[Forest decomposition]
\label{lem:forestdec}
For any Hamiltonian $H$ of maximum degree $d$, there exists a decomposition $H=\sum_{c=1}^d H_c$ and an assignment of directions to the edges such that each $H_c$ is a forest of arborescences. Furthermore, given a color $c$ and a vertex $v$, we can determine $v$'s parent in $H_c$ with one query (or determine that it is a root) and with $O(d)$ queries we can determine the list of edges in $H_c$ incident on $v$. 
\end{lemma}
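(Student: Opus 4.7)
The plan is to use a Panconesi--Rizzi style forest decomposition, tailored so that the oracle's (arbitrary but fixed) ordering of each vertex's neighbors doubles as a source of edge colors. Throughout, I identify the vertex set with $\{0, 1, \ldots, N-1\}$ and compare vertices by their labels.

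First, for every edge $\{u, v\}$ with $u < v$, I orient it from $u$ to $v$ and color it with the unique index $c \in \{1, \ldots, d\}$ such that $f(v, c) = (u, H_{vu})$, i.e., the position at which $u$ appears in $v$'s oracle list of neighbors. This is well-defined because each neighbor of $v$ appears exactly once in its list, and summing the resulting $d$ weighted adjacency matrices recovers $H$.

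Next I would verify the structural claim. Any directed path in $H_c$ strictly increases vertex labels, so $H_c$ is acyclic. Moreover, the only edge of color $c$ that can point into a given vertex $v$ is $(u, v)$ with $u$ equal to the $c$-th neighbor of $v$ in the oracle list and $u < v$; hence every vertex has in-degree at most one in $H_c$. Each weakly-connected component is therefore a tree in which every non-root has exactly one in-edge, i.e., an arborescence, whose root is the unique vertex whose $c$-th oracle entry is either absent or points to a larger label. So each $H_c$ is a forest of arborescences.

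For the query bounds: to find $v$'s parent in $H_c$, issue the single query $f(v, c)$ and inspect the returned label $u$; if $u < v$ then $u$ is the parent, otherwise $v$ is a root. To enumerate all edges of $H_c$ incident on $v$, first perform that query (handling the possible in-edge), then query $f(v, 1), \ldots, f(v, d)$ to reveal all neighbors of $v$, and for each neighbor $w > v$ query $f(w, c)$ to test whether $w$'s $c$-th oracle entry is $v$; the edge $\{v, w\}$ lies in $H_c$ if and only if this test succeeds. Since $v$ has at most $d$ larger-labeled neighbors, the total cost is $O(d)$ queries.

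The main subtlety I expect is ensuring the single-query parent lookup really works despite the oracle's arbitrary neighbor ordering. The design choice that buys this is coloring each edge by its position in the higher-labeled endpoint's list rather than the lower one's: the parent of a non-root $v$ in $H_c$ is then literally the $c$-th entry returned by $f(v, \cdot)$, so one query suffices. Enumerating all incident edges in $H_c$ is then essentially the task of identifying $v$'s children, which costs one extra query per candidate neighbor.
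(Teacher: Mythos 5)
Your proposal is correct and follows essentially the same route as the paper: color each edge by its position in the higher-labeled endpoint's oracle list, orient edges from lower to higher label, and derive the forest-of-arborescences structure from acyclicity plus in-degree at most one, with the same one-query parent lookup and $O(d)$-query edge enumeration. Your explicit handling of the case where $f(v,c)$ returns a larger label (so $v$ is a root) is a slightly cleaner statement of the same check the paper performs.
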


\begin{proof}
We first describe a procedure that assigns a color $c$ to each edge. $H_c$ then consists of all edges colored $c$. To color the edges, every vertex proposes a color for each edge incident on it using the oracle in the following way: if $f(x,i)=(y,H_{x,y})$, then $x$ proposes color $i$ for the edge $xy$. Similarly, $y$ proposes a color for the edge $xy$. The edge is now colored using the proposal of the vertex with higher label (i.e., if $x>y$ then the edge $xy$ is colored with $x$'s proposal). This coloring uses $d$ colors, which is optimal up to constants since a $d$-sparse graph can have $dn/2$ edges, but forests have at most $n-1$ edges. 

Now we assign directions to the edges and show that each $H_c$ has no cycles, which shows that each $H_c$ is a directed forest. The edge $xy$ is directed from $x$ to $y$ if $x < y$. This choice of directions results in a directed acyclic graph, which has no directed cycles. To rule out non-directed cycles, we note that any such cycle must contain a vertex $v$ for which both the edges of the cycle point toward $v$. This means the label of $v$ is greater than that of its two neighbors. Thus the color of these edges was decided by $v$, which cannot happen since vertices propose different colors for different edges. 

To show that each tree in $H_c$ is an arborescence, we show that it has a unique root. Observe that a directed tree with more than one root must have a vertex with more than one parent. This again leads to the situation where a vertex has two incoming edges of the same color, which is not possible since these edges are colored by this vertex's proposal.

To show that the parent of a vertex can be determined with one query, note that if $p_v$ is the parent of vertex $v$ in $H_c$, then the edge from $p_v$ to $v$ must be directed toward $v$. Thus the color of this edge is decided by $v$. If this edge is in $H_c$, it is colored $c$. So if $v$ has a parent, it must be the $c^\mathrm{th}$ neighbor of $v$. With one query to the oracle, we can determine the $c^\mathrm{th}$ neighbor of $v$. If there is no such neighbor, this vertex has no parent and is a root in $H_c$. Otherwise the output contains the label of the parent.

Finally, we show how to determine the list of edges in $H_c$ incident on $x$ with $O(d)$ queries. First we query the oracle at most $d$ times to get the labels of all the neighbors of $x$. For a neighbor $y$ where $y<x$, the edge between $x$ and $y$ is colored by $c$ only if $y$ is $x$'s parent in $H_c$. Thus we can discard all edges $xy$ where $y<x$ but $y$ is not the parent of $x$. When $y>x$, an edge between $x$ and $y$ is colored $c$ only if $x$ is $y$'s parent in $H_c$, and it takes one query to verify this for each $y$. Thus with at most $d$ additional queries we can determine if all such edges are colored $c$.
\qed\end{proof}

This lemma shows how to decompose a Hamiltonian into directed forests. Let $T$ be the Hamiltonian of such a forest. We will decompose $T$ into a sum of 6 galaxies, $T=T_1+T_2+\cdots+T_6$. This is achieved by using an extension of the ``deterministic coin tossing'' protocol of Cole and Vishkin~\cite{CV86} by Goldberg, Plotkin and Shannon~\cite{GPS88}. Their protocol gives a proper vertex coloring of an arborescence using only 6 colors making $O(\log^* N)$ queries. Vertex coloring a directed forest of arborescences gives a galaxy decomposition of the forest, since all the edges that point to vertices of a particular color form a galaxy.

\begin{lemma}[Vertex coloring a forest]
\label{lem:vertexcolor}
If $T$ is a forest of arborescences, and the parent of a vertex can be determined with one query to an oracle for $T$, then there exists a proper vertex coloring of $T$ using 6 colors, such that  the color of any vertex can be determined by making $O(\log^* N)$ queries. 
\end{lemma}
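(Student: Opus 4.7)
The plan is to apply the Cole–Vishkin deterministic coin tossing technique, adapted to rooted trees by Goldberg, Plotkin, and Shannon. I will describe an iterative coloring procedure in which each round reduces the number of colors in use while preserving properness, and then show that (i) $O(\log^* N)$ rounds suffice to reach a 6-coloring of $T$, and (ii) the round-$r$ color of any vertex depends only on the labels of its first $r$ ancestors, so the final color is computable with $O(\log^* N)$ queries.

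Initially, color each vertex by its own label, giving a proper coloring in $\{0,1,\dots,N-1\}$. In each round, every non-root vertex $v$ with current color $c(v)$ and parent color $c(p_v)$ locates the lowest-order bit index $k$ at which $c(v)$ and $c(p_v)$ differ, and updates its color to $2k+b_k(c(v))$, where $b_k$ denotes the $k$-th bit. A short case analysis shows that the new colors of $v$ and $p_v$ remain distinct: either the indices $k$ chosen by $v$ and $p_v$ disagree (so the even parts $2k$ differ), or they agree and then, by the choice of $k$, the appended bits $b_k(c(v))$ and $b_k(c(p_v))$ must differ. Hence properness is preserved, and a palette of size $K$ shrinks to one of size at most $2\lceil\log K\rceil$ in a single round. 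After $\log^* N$ rounds the palette has constant size, and a constant number of additional cleanup rounds compresses it down to $6$ colors. Roots, which have no parent and so do not participate in the bit-collision update, are dealt with in the same final constant-round cleanup: once the non-root palette is confined to $\{0,\dots,5\}$, each root can be recolored with a value distinct from the (constantly many colors taken by its) children.

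The query analysis is the heart of the argument. Unrolling the update rule, the round-$r$ color of $v$ is a deterministic function of the round-$(r-1)$ colors of $v$ and $p_v$, and hence by induction of the round-$0$ colors (i.e., labels) of $v,\,p_v,\,p_v^{(2)},\dots,p_v^{(r)}$, where $p_v^{(i)}$ denotes the $i$-th iterated parent. By hypothesis, one oracle query suffices to pass from any vertex to its parent (or to detect that the vertex is a root), so all $r+1$ ancestor labels, and hence the round-$r$ color, are computable from $r$ queries. Taking $r = O(\log^* N)$ gives the claimed bound.

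The main obstacle is verifying cleanly that the bit-collision update preserves properness at every round, and handling roots (which break the uniform update rule) without inflating either the query count or the final palette size. Both points are standard in the distributed symmetry-breaking literature but must be written out explicitly to nail down the constant $6$ and ensure that the root cleanup does not exceed $O(\log^* N)$ queries.
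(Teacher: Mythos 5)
Your proposal is the same Cole--Vishkin/Goldberg--Plotkin--Shannon argument the paper uses, and the core of it --- the bit-collision update, the properness-preservation case analysis, the palette recurrence reaching a fixed point of $6$, and the observation that the round-$r$ color depends only on the first $r$ ancestors --- matches the paper's proof. However, your treatment of roots has a genuine gap. You pull roots out of the update loop and propose to recolor each root at the end ``with a value distinct from the colors taken by its children.'' This fails on two counts. First, a root may have children bearing all $6$ colors (siblings are not required to agree under the bit-collision rule, since different children can find different first-differing-bit positions against the same parent), in which case no $7$th color is available in the palette; repairing this requires something like the shift-down trick, which you do not invoke. Second, and more damaging for the lemma as stated, the oracle hypothesis only lets you find a vertex's \emph{parent} with one query; computing a root's color under your rule requires knowing the colors of all its children, which is not supported by the parent oracle and would in any case cost far more than $O(\log^* N)$ queries for a high-degree root, breaking the query bound that is the whole point of the lemma.

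The paper avoids all of this by keeping roots inside the update: a root simply takes $k=0$ (with $b$ read off its own current color), which costs no queries, keeps the root's color a function of the root alone, and automatically preserves properness against each child $y$ because $y$'s chosen index $k$ either differs from $0$ or forces the appended bits to disagree. If you adopt that convention your argument goes through verbatim; as written, the root cleanup is the one step that does not work.
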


\begin{proof}
We first describe the vertex-coloring procedure for the forest. A simple observation is that we already possess a vertex coloring of the forest: the labels of the vertices. This is a trivial proper vertex coloring using $N$ colors. Now we use a procedure that decreases the number of colors used by a logarithmic factor. Then we can run several rounds of this procedure to decrease the number of colors down to 6. Let $c_j(x)$ be the color assigned to vertex $x$ at the beginning of the $j^\mathrm{th}$ round of the procedure. At the beginning of the first round, we have $c_1(x) = x$.

Let $x$ be a vertex with parent $p_x$. Assume that we started with a proper vertex coloring at the beginning of round $j$. Since we have a proper coloring, $c_j(x) \neq c_j(p_x)$. Let $k$ be the index of the first bit at which $x$ and $p_x$ differ, and let $b$ be the value of the $k^\mathrm{th}$ bit of $x$. The new color for vertex $x$ is the concatenation of $k$ and $b$, denoted $(k,b)$. If $x$ is the root, we take $k=0$.  We claim that if each vertex performs this procedure, the result is a proper vertex coloring. 

For a contradiction, suppose there are two adjacent vertices that have been assigned the same color in round $j$. Without loss of generality, one of them is the parent of the other, so let them be $y$ and its parent $p_y$. Since we started with a proper coloring at the beginning of round $j$, $c_j(y) \neq c_j(p_y)$, but now $c_{j+1}(y) = c_{j+1}(p_y)$. Let $c_{j+1}(y)=(k,b)$ where by definition $k$ is the bit at which  $c_j(y)$ and $c_j(p_y)$ differ, and $b$ is the value of the $k^\mathrm{th}$ bit of  $c_j(y)$. Since $c_{j+1}(p_y)$ also equals $(k,b)$, the $k^\mathrm{th}$ bit of $c_j(p_y)$ is $b$. But $c_j(y)$ and $c_j(p_y)$ are supposed to differ at the $k^\mathrm{th}$ bit, so this is a contradiction.  Therefore the coloring procedure is valid.

It remains to show that if the colors of the vertices are updated in this way, we reduce the number of colors to $6$ in $O(\log^* N)$ rounds. If $L_j$ is the number of bits used to represent colors at the beginning of round $j$, then $L_{j+1} = \lceil\log(L_j)\rceil+1$. Initially, $L_1=\lceil \log N \rceil$. This recurrence relation can be solved to yield $L_j\leq 3$ when $j=\log^* N$~\cite{GPS88}. 
Further rounds cannot decrease $L_j$ below 3, since $L_{j+1}=L_{j}$ when  $L_j=3$. A length of 3 bits allows the use of only 8 colors. Now we run the procedure once more. Since there are 3 possible values for $k$, and 2 for $b$, there are at most 6 different colors. The total number of rounds is now $\log^* N+1$. 

To show that the color of a vertex can be determined with $O(\log^* N)$ queries, we note that the color of vertex $x$ at the end of the first round depends solely on $x$ and $p_x$. In general, the color of vertex $x$ at the end of $j$ rounds depends only on its first $j$ ancestors. To determine $x$'s color after $\log^* N+1$ rounds, we need the labels of its  $\log^* N+1$ ancestors, which can be found with $\log^* N+1$ queries, since the parent of a vertex can be found with one query. 
\qed\end{proof}

We have shown that a Hamiltonian can be decomposed into $d$ forests of arborescences, each of which can be vertex-colored with 6 colors. If we consider all the edges of one of the $d$ forests that point to a vertex of a particular color, this graph is a galaxy. So this decomposes the original Hamiltonian into $6d$ galaxies. For this particular decomposition of the Hamiltonian to be useful, we need to show that galaxies can be simulated easily. 

\begin{theorem}[Galaxy simulation]
\label{thm:galaxysim}
If $H_j$ is a Hamiltonian whose graph is a galaxy of maximum degree $d$, and the oracle can identify which vertices are centers of stars, then the unitary operator $e^{-\ii H_jt}$ can be simulated using $O(d)$ calls to an oracle for $H_j$.
\end{theorem}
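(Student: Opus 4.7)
The plan is to exploit two structural facts about a galaxy. First, because the stars are vertex-disjoint, the restrictions $H^{(c)}$ of $H_j$ to each star commute pairwise, so $e^{-\ii H_j t}=\prod_c e^{-\ii H^{(c)}t}$. Second, each $H^{(c)}$ has rank two: writing $w_i^c:=(H_j)_{c,\ell_i^c}$ and $N_c:=(\sum_i |w_i^c|^2)^{1/2}$, Hermiticity gives
\be
H^{(c)}=N_c\bigl(\ket{c}\bra{\phi_c}+\ket{\phi_c}\bra{c}\bigr),\quad \ket{\phi_c}:=\frac{1}{N_c}\sum_i (w_i^c)^*\ket{\ell_i^c},
\ee
so on $\mathrm{span}\{\ket{c},\ket{\phi_c}\}$ the star Hamiltonian is $N_c$ times a Pauli $X$, and its exponential is a rotation there.

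The main tool will be a unitary $V$ (acting on the vertex register together with an $O(\log d)$-qubit ``position'' ancilla initialized to $\ket{0}$) that sends $\ket{v}\ket{0}\mapsto\ket{c_v}\ket{p_v}$, where $c_v$ is the center of $v$'s star and $p_v\in\{0,1,\ldots,d\}$ is the index of $v$ in the oracle's neighbor list of $c_v$ (with $p_v=0$ if $v$ is itself a center). Conjugation by $V$ yields $V H_j V^\dagger = \sum_c N_c \ket{c}\bra{c}\otimes (\ket{0}\bra{\chi_c}+\ket{\chi_c}\bra{0})$ with $\ket{\chi_c}=(1/N_c)\sum_i (w_i^c)^*\ket{i}$. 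Letting $W_c$ be any unitary on the position register with $W_c\ket{0}=\ket{0}$ and $W_c\ket{1}=\ket{\chi_c}$, each summand becomes $N_c\ket{c}\bra{c}\otimes W_c\sigma_x W_c^\dagger$ with $\sigma_x=\ket{0}\bra{1}+\ket{1}\bra{0}$, so the exponential factors as $\ket{c}\bra{c}\otimes W_c e^{-\ii N_c t\sigma_x}W_c^\dagger$.

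The algorithm is therefore: apply $V$; apply the controlled rotation $W_c e^{-\ii N_c t\sigma_x}W_c^\dagger$ on the position register (controlled on the weight data produced by $V$); apply $V^\dagger$. To implement $V$ I would first use one query to identify $c_v$ (if the oracle flags $v$ as a center then $c_v=v$ for free; otherwise $v$ has a unique neighbor returned by a single call). Then I would query $f(c_v,i)$ for each $i\in\{1,\ldots,d\}$ and write the returned labels and weights into ancillas; from this data $p_v$, $\ket{\chi_c}$, and $N_c$ are computable with no further oracle calls, $W_c$ is a unitary circuit that reads the weight ancilla, and $e^{-\ii N_c t\sigma_x}$ is a $2\times 2$ rotation by a fixed function of that ancilla. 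Thus the middle step consumes no queries, and $V^\dagger$ reverses the previous two steps, giving a total of $O(d)$ oracle calls.

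The main obstacle I expect is bookkeeping rather than anything conceptually new: $V$ must be extended to a genuine unitary on the enlarged Hilbert space (which requires fixing its action on unused position labels $i>d_c$ for each center $c$), the weight ancillas must be written reversibly so that $V^\dagger$ really does uncompute them, and I must verify that the disjointness of stars makes the per-$c$ rotations combine into a single block-diagonal unitary. Each of these should follow cleanly from the rank-two structure of each star together with the disjointness of different stars' supports.
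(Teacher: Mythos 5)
Your proposal is correct and follows essentially the same route as the paper: you spend $O(d)$ queries on a unitary ($V$, the paper's $U$) that coherently gathers all information about the star containing the input vertex, perform the star-local evolution with no further queries, and then uncompute. The only difference is one of detail --- where the paper simply asserts that the query-free middle Hamiltonian $K$ can be simulated efficiently, you make this explicit via the rank-two structure $H^{(c)}=N_c(\ket{c}\bra{\phi_c}+\ket{\phi_c}\bra{c})$ and the conjugated $\sigma_x$ rotation, which is a welcome but not essentially different elaboration.
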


\begin{proof}
The key idea is that given a vertex $v$, we can learn everything about the star to which $v$ belongs in $O(d)$ queries. If $v$ is the center of the star, the oracle identifies it as the center, so we can query all its neighbors to learn everything about the star with at most $d$ queries. If $v$ is not the center, we can determine the center, which is the only neighbor of $v$, with only one query, and then learn the rest of the star with at most $d$ queries.

Let $R(x)$ denote all the information about the star to which $x$ belongs: the label of the center, the labels of the other vertices in some fixed order, and the weights of all the edges. It is essential that $R(x)$ depend only the star and not the particular vertex $x$ chosen from the star, so that if $x$ and $y$ belong to the same star then $R(x)=R(y)$. Since we know that $R(x)$ can be computed with $O(d)$ queries, we can implement the unitary $U$ given by $U\ket{x,0} = \ket{x,R(x)}$ with $O(d)$ queries.

The Hamiltonian we are trying to simulate, $H_j$, is a galaxy. Thus, if $c$ is the center of a star, and its neighbors are $y_i$ with edge weights $w_i$, then $H_j\ket{c} = \sum_i w_i\ket{y_i}$. If $x$ is not the center of a star, and the edge between $x$ and the center $c$ has weight $w_x$, then $H_j\ket{x}= w_x\ket{c}$. Let $K$ be a Hamiltonian which is similar to $H_j$, but acts on the input state $\ket{x,R(x)}$ instead of $\ket{x}$. That is, $K\ket{c,R(c)} = \sum_i w_i\ket{y_i,R(y_i)}$ when $c$ is the center, and $K\ket{x,R(x)}= w_x\ket{c,R(c)}$ otherwise. Note that although the second register looks different, it is unaffected by $K$ since $R(x)$ depends only on the star and not the vertex. Combining $K$ with the unitary $U$ above, we see that $H_j = U^\dag KU$. In words, $U$ first computes all the information about the star in another register, $K$ performs the required Hamiltonian, and the $U^\dag$ uncomputes the second register, which was unaffected by $K$.

This simulation is efficient since $K$ can be simulated efficiently.  More importantly for our purposes, $K$ requires no queries to implement, since all the information about the star is already present in the second register. Thus the operation $H_j = U^\dag KU$ requires only as many queries as $U$ and $U^\dag$ require, which is $O(d)$.
\qed\end{proof}

Combining \lem{forestdec}, \lem{vertexcolor}, and \thm{galaxysim} gives our Hamiltonian decomposition theorem.

\begin{theorem}[Hamiltonian star decomposition]
\label{thm:HD}
There exists a decomposition $H=\sum_{j=1}^m H_j$, where each $H_j$ is a galaxy, such that $m=6d$ and each galaxy $H_j$ can be simulated for time $t'$ using $Q(H_j)=O(d + \log^* N)$ queries to an oracle for $H$.
\end{theorem}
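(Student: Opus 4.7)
The plan is to stack the three earlier results in the order forest decomposition $\to$ vertex coloring $\to$ star extraction, verify that the resulting pieces are galaxies of the right kind, and then argue that the galaxy oracle can be implemented with the promised query budget.

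First I would apply \lem{forestdec} to write $H = T_1 + \cdots + T_d$, where each $T_c$ is a forest of arborescences. Then for each $c$, I would apply \lem{vertexcolor} to produce a proper $6$-coloring of the vertices of $T_c$. For each pair $(c,c')$ with $c \in \{1,\ldots,d\}$ and $c' \in \{1,\ldots,6\}$, define $H_{c,c'}$ to consist of those edges of $T_c$ that point to vertices of color $c'$. This gives $H = \sum_{c=1}^{d}\sum_{c'=1}^{6} H_{c,c'}$ with $m=6d$ summands. To see that each $H_{c,c'}$ is a galaxy, observe that every color-$c'$ vertex has exactly one incoming edge in $T_c$ (from its parent), and the parent must have color $\neq c'$ because the coloring is proper; conversely, a non-color-$c'$ vertex $p$ is connected in $H_{c,c'}$ only to its color-$c'$ children. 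Thus each connected component is a star whose center is a non-color-$c'$ vertex and whose leaves are its color-$c'$ children.

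To invoke \thm{galaxysim}, I need an oracle for $H_{c,c'}$ that both identifies centers and answers neighbor queries. Center identification is immediate: a vertex is a center iff its color in $T_c$ is different from $c'$, which can be determined by \lem{vertexcolor} in $O(\log^* N)$ queries to $H$. The delicate point, and the main obstacle, is implementing the gather step in the proof of \thm{galaxysim}---computing $R(x)$, the full description of the star containing $x$---within the budget $O(d+\log^* N)$ queries rather than the naive $O(d\log^* N)$ obtained by computing $\log^* N$ ancestors for each of $d$ candidate children.

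The amortization I would use is the following. In either case (leaf or center), the star's center $p$ can be located with at most one query using \lem{forestdec}. Next, compute the entire color history of $p$ under the procedure of \lem{vertexcolor} by fetching its $O(\log^* N)$ ancestors with $O(\log^* N)$ queries. Then fetch the list of children of $p$ in $T_c$ with $O(d)$ queries using \lem{forestdec}. Crucially, once $p$'s color history is known, the color of each child $w$ at every round of the coloring procedure is determined by the label of $w$ together with $p$'s color history \emph{without any further queries to $H$}, because a child's color at round $j+1$ is a function of its color and its parent's color at round $j$. So we can filter the children by color $c'$ for free, producing $R(x)$ in $O(d+\log^* N)$ queries to $H$. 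Finally, as in the proof of \thm{galaxysim}, the simulation of $e^{-\ii H_{c,c'} t'}$ factors as $U^\dagger K U$, where $U\ket{x,0}=\ket{x,R(x)}$ is the map just described and $K$ requires no queries; hence the whole simulation costs $O(d+\log^* N)$ queries to $H$, as claimed.
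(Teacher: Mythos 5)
Your proof is correct and follows essentially the same route as the paper: decompose $H$ into $d$ forests of arborescences, $6$-color each forest, take the edges pointing into each color class as one of the $6d$ galaxies, and amortize the coloring cost by observing that all children of a star's center share the same ancestor chain, so their colors are computable from the center's color history together with their own labels at no additional query cost. Your explicit treatment of that amortization step is, if anything, slightly more detailed than the paper's own wording.
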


\begin{proof}
From \lem{forestdec}, \lem{vertexcolor}, and \thm{galaxysim}, we know that the claimed decomposition is possible. It remains to show that any $H_j$ can be simulated for time $t'$ using $O(d + \log^* N)$ queries. 

To show this, let us implement $H_j$ on the basis state $\ket{x}$. If the implementation is correct on all basis states, it is correct for all input states by linearity. We are given $1 \le c \le d$ and $1 \le t \le 6$, which together form the index $j$. We want to simulate the galaxy formed by edges in $H_c$ directed toward vertices colored $t$ by the vertex coloring algorithm of \lem{vertexcolor}. 

From the proof of \thm{galaxysim}, it is clear that if we can compute $R(x)$, then we can implement $U$, and thereby simulate the desired Hamiltonian. $R(x)$ contains all the information about the star to which $x$ belongs. Using the result of \lem{forestdec}, we can determine the list of $x$'s neighbors in $H_c$ using $O(d)$ queries. By the result of \lem{vertexcolor}, with  $O(\log^* N)$ queries we can determine $x$'s color according to the vertex coloring algorithm. 

If $x$'s color is not $t$, then $x$ must be the center of a (possibly empty) star in $H_j$. The only edges in this star point toward vertices of color $t$, so we compute the colors of all the children of $x$ in $H_c$. These can be computed using only the labels of their $\log^* N+1$ nearest ancestors, which are all common ancestors. Thus we can compute the colors of all of $x$'s children using $O(\log^* N)$ queries in total. Now we know the star around $x$, and thus $R(x)$, using $O(d+\log^* N)$ queries.

If $x$'s color is $t$, then $x$'s parent is the center of star. The parent of $x$, $p_x$, can be determined with one query. Since $x$ and $p_x$ are in the same star, $R(x) = R(p_x)$.  Since $p_x$ is the center of a star, we can compute $R(p_x)$ as described above; thus we can also compute $R(x)$.

We have shown that for any $x$, we can compute $R(x)$ with $O(d+\log^* N)$ queries. Thus the unitary $U$ in the proof of \thm{galaxysim} can be simulated with $O(d+\log^* N)$ queries. By \thm{galaxysim}, this means we can implement $H_j$ with $O(d+\log^* N)$ queries, as claimed.
\qed\end{proof}

Now we can use our Hamiltonian decomposition theorem with the Hamiltonian recombination theorem (\thm{BACSHR}). Since we have $Q(H_j)=O(d+\log^* N)$ from \thm{HD} and $m=6d$ from \lem{forestdec} and \lem{vertexcolor}, we get our final result using $Q(H,t)\leq N_\mathrm{exp} \times \max_{j} Q(H_j)$:
\be
\label{eq:HS}
Q(H,t) = O\left(5^{2k} d^2 (d+\log^*N) \norm{H} t \left(\frac{d\norm{H}t}{\epsilon}\right)^{1/2k}\right).
\ee

When compared with the query complexity of \eq{BACSHS}, we see that this improves the scaling with $d$. Furthermore, when $d=\Omega(\log^* N)$, which is likely to be the case when $d$ is not constant, \eq{HS} has no $\log^*N$ term: the scaling (in terms of $d$ and $N$) is $(d^3)^{1+o(1)}$, as compared to \eq{BACSHS} which scales like $(d^4\log^* N)^{1+o(1)}$.

\section{Remarks and conclusion}

So far, we have measured the size of $H$ using the spectral norm $\norm{H}$.  However, if we express the simulation complexity in terms of a different norm, then both \thm{BACSHR} and equation \eq{HS} can be improved to give slightly better bounds.

In the proof of \thm{BACSHR}, $\norm{H}$ is used as a simple upper bound for $\max_j \norm{H_j}$. However, omitting this step gives a slightly stronger version of \thm{BACSHR} with $\norm{H}$ replaced by $\max_j \norm{H_j}$.  For a 1-sparse Hamiltonian, $\norm{H_j}=\max(H_j)\leq\max(H)$~\cite{CK09}, so $\norm{H}$ can be replaced by $\max(H)$ in \eq{BACSHS}.  However, this also leads to an improvement of \eq{HS}. When $H_j$ is a galaxy, $\norm{H_j} = \mcn(H_j)$~\cite{CK09}, and since $H_j$ is entry-wise upper bounded by $H$, $\mcn(H_j) \leq \mcn(H)$. Thus $\norm{H}$ can be replaced with $\mcn(H)$ in \eq{HS}.  To directly compare the two simulations, we can apply the bound $\mcn(H) \leq \sqrt{d}\max(H)$~\cite{CK09} to express both query complexities in terms of $\max(H)$.  In these terms, we still find that star decomposition improves over edge coloring: our algorithm uses at most $(d^{2.5}(d+\log^* N)\max(Ht))^{1+o(1)}$ queries, whereas the algorithm of Ref.~\cite{BAC+07} scales like $(d^{4}(\log^* N)\max(Ht))^{1+o(1)}$.

In conclusion, we have described a Hamiltonian decomposition technique that reduces the query complexity of simulating sparse Hamiltonians. By the degree-dependent lower bounds established in Ref.~\cite{CK09}, we know that query complexities scaling like $\norm{H}$ or $\sqrt{d}\max(H)$ cannot be achieved. It would be interesting to see if the Hamiltonian decomposition--recombination framework can be used to further reduce the dependence on $d$ while keeping a similar dependence on the error $\epsilon$, or to establish stronger limitations on the simulation of sparse Hamiltonians taking error dependence into account.


\bibliographystyle{splncs}

\end{document}